\theoremstyle{plain} 
\newtheorem{thm}{Theorem}
\newtheorem{cor}{Corollary}
\theoremstyle{definition}
\theoremstyle{remark}
\newtheorem{ex}{Example}
\newtheorem{remark}{Remark}
\newcommand{\prob}{\mathsf{P}} 
\newcommand{\E}{\mathsf{E}}
\newcommand{\bel}{\mathsf{bel}}
\newcommand{\pl}{\mathsf{pl}}
\newcommand{\bin}{{\sf Bin}}
\newcommand{\unif}{{\sf Unif}}
\newcommand{\nm}{{\sf N}}
\newcommand{\gam}{{\sf Gamma}}
\newcommand{\chisq}{{\sf ChiSq}}
\newcommand{\YY}{\mathbb{Y}}
\newcommand{\UU}{\mathbb{U}}
\newcommand{\bigmid}{\; \Bigl\vert \;}
\renewcommand{\L}{\mathcal{L}}
\newcommand{\C}{\mathcal{C}}
\renewcommand{\S}{\mathcal{S}}
\renewcommand{\phi}{\varphi} 
\newcommand{\eps}{\varepsilon}
\newcommand{\avar}{\sigma_\alpha^2}
\newcommand{\evar}{\sigma_\eps^2}
\newcommand{\stgeq}{\geq_{\text{st}}}
\title{On an inferential model construction using \\ generalized associations}
\author{Ryan Martin \\
Department of Mathematics, Statistics, and Computer Science \\
University of Illinois at Chicago \\
\url{rgmartin@uic.edu} 
}
\date{\today}
\begin{document}

\maketitle 

\begin{abstract}    
The inferential model (IM) approach, like fiducial and its generalizations, depends on a representation of the data-generating process.  Here, a particular variation on the IM construction is considered, one based on generalized associations.  The resulting generalized IM is more flexible than the basic IM in that it does not require a complete specification of the data-generating process and is provably valid under mild conditions.  Computation and marginalization strategies are discussed, and two applications of this generalized IM approach are presented.  

\smallskip

\emph{Keywords and phrases:}  Likelihood; marginalization; Monte Carlo; plausibility function; random set; validity.
\end{abstract}

\section{Introduction}
\label{S:intro}  

An advantageous feature of the mainstream approaches to statistical inference is simplicity.  On one hand, likelihood-based approaches, including ``Frasian'' inference \citep[e.g.,][]{reid2003, fraser1990, fraser1991, bn1991, fraser2011} and certain forms of Bayesian inference \citep[e.g.,][]{bernardo1979, mghosh2011, bergerbernardosun2009, berger.bernardo.sun.2015}, are simple in the sense that the calculations relevant to data analysis are largely (or completely) determined by the posited sampling model.  On the other hand, frequentist approaches are also simple because the ``do whatever works well'' viewpoint is extremely flexible.  This is in sharp contrast with fiducial inference \citep{fisher1973, dawidstone1982, barnard1995, taraldsen.lindqvist.2013}, its generalizations \citep{hannig2009, hannig.review.2015}, and the recently proposed inferential model (IM) framework \citep{imbasics, imcond, immarg, imbook}, which appear to be not-so-simple in the sense that their construction depends on something more than the data and sampling model.  In particular, the fiducial and IM construction begins with a specific representation of the data-generating mechanism, one that determines but is not determined by the sampling model.  This data-generating mechanism identifies an auxiliary variable, or pivotal quantity, that controls the random variation in the observable data.  A familiar example of this kind is the regression model, $Y = X\beta + \sigma \eps$, where the random ``$\eps$'' part controls the variation of the response $Y$ around the deterministic ``$X\beta$'' part.  That the fiducial and IM solutions depend on the choice of the data-generating mechanism may be seen as a shortcoming of these approaches.   

One approach to deal with the choice of data-generating mechanism is to find one that is ``best'' in some sense; for example, \citet{majumdar.hannig.2015} compare different data-generating mechanisms using higher-order asymptotics in the fiducial context.  Since defining and identifying the ``best'' is difficult, I want to take a different approach.  In this paper, building on \citet[Ch.~11]{imbook}, I want to incorporate the familiar frequentists' flexibility into the IM construction.  This allows the user to construct a \emph{generalized IM} without specifying a full data-generating mechanism, simplifying the construction in several ways.  First, just like in the likelihood-based approaches mentioned above, a generalized IM can be constructed based on the sampling model alone, or some function thereof, easing the burden on the user.  Second, the generalized IM can be constructed based on a \emph{generalized association} that involves only a one-dimensional auxiliary variable, which simplifies user's task of selecting a good predictive random set.  Compare this to the basic IM approach where the user must first specify a data-generating mechanism and carry out some potentially non-trivial dimension-reduction steps \citep[e.g.,][]{imcond}.  Despite making substantial simplifications to the IM construction, it can be shown that this generalization preserves the IM's guaranteed validity property under mild conditions.  Therefore, the generalized IM framework is a simple and widely applicable tool for valid, prior-free, probabilistic inference.  

This paper's main contribution is the new perspective it brings to some more-or-less familiar ideas, results, and techniques.  Specifically, all of the familiar considerations used in constructing statistical procedures fit within the the seemingly rigid IM framework, and this has at least two useful consequences.  First, working within the IM framework does not require that one abandon all the classical tools and ways of thinking---these can be merged seamlessly into the framework itself.  Second, new insights concerning these classical tools can be gained when looking from an IM point of view; see Section~\ref{SS:remarks}.  

The remainder of the paper is organized as follows.  After some background on IMs in Section~\ref{S:background}, the new generalized IM approach is presented in Section~\ref{S:gims}, with a motivating validity theorem and a special case that is relatively easy to implement, involving only a scalar auxiliary variable, and having good properties.  Some important practical considerations, namely, computation and marginalization, are discussed in Section~\ref{S:practical}, and two interesting and challenging applications---inference on the odds ratio in $2 \times 2$ tables and inference on the error variance in mixed-effects models---are presented in Section~\ref{S:examples}.  Concluding remarks are made in Section~\ref{S:discuss}.

\section{Background on IMs}
\label{S:background}

Let $Y \in \YY$ be the observable data, and write $\prob_{Y|\theta}$ for the sampling model, which depends on an unknown parameter $\theta \in \Theta$.  In the basic IM framework, described in \citet{imbasics}, the starting point---the \emph{A-step}---is to associate $Y$ and $\theta$ with an unobservable auxiliary variable $U \in \UU$ with known distribution $\prob_U$.  Formally, suppose the association can be written as
\begin{equation}
\label{eq:basic.assoc}
Y = a(\theta, U), \quad U \sim \prob_U. 
\end{equation}
\citet{imcond, immarg} argue that some dimension-reduction steps should be taken first before an association mapping is defined, so the left-hand side may be something different than the observable data, e.g., a minimal sufficient statistic.  This dimension-reduction step is recommended, but it is not necessary to describe these details here.  The result of the A-step is a set-valued mapping 
\begin{equation}
\label{eq:basic.focal}
\Theta_y(u) = \{\theta: y = a(\theta,u)\}, \quad u \in \UU, 
\end{equation}
indexed by the observed $Y=y$.  The main point is that the association determines the sampling model $\prob_{Y|\theta}$ or, alternatively, the ingredients in \eqref{eq:basic.assoc} must be chosen to be consistent with the given sampling model.  However, there may be several versions of the association that are consistent with the sampling model, and different versions may produce different inferences.  This is not unlike the frequentists' choice of (approximate) pivot for constructing a test, confidence region, etc.  In any case, the question of which association \eqref{eq:basic.assoc} to take, for given sampling model $\prob_{Y|\theta}$, is an important one.  

The second step in the basic IM construction---the \emph{P-step}---is to predict the unobserved value of $U$ in \eqref{eq:basic.assoc}, corresponding to the observed $Y=y$, with predictive random set $\S$.  The P-step is the defining feature of the IM framework, driving its essential properties and separating it from the approach described in \citet{dempster2008}.  The distribution $\prob_\S$ of $\S$ is to be chosen by the user, subject to a certain ``validity'' condition, namely, that, if $f_\S(u) = \prob_\S(\S \ni u)$, then 
\[ f_\S(U) \stgeq \unif(0,1), \quad \text{as a function of $U \sim \prob_U$}, \]
where ``$\stgeq \unif(0,1)$'' means ``stochastically no smaller than $\unif(0,1)$,'' i.e.,  
\begin{equation}
\label{eq:prs.valid}
\prob_U\bigl\{f_\S(U) \leq \alpha\bigr\} \leq \alpha, \quad \forall \; \alpha \in (0,1). 
\end{equation}
Intuitively, the random set $\S$ is meant to be ``good'' at predicting samples from $\prob_U$ and \eqref{eq:prs.valid} makes this precise: the $\prob_\S$-probability of the event ``$\S \ni u$'' is small only for a set of $u$ values with relatively small $\prob_U$-probability.  Sufficient conditions for \eqref{eq:prs.valid} are mild, so it is easy to find a valid predictive random set; in fact, most applications of IMs employ a simple ``default'' predictive random set, see \eqref{eq:default.prs}.  


The third and final step in the basic IM construction---the \emph{C-step}---is to combine the association at the observed data $Y=y$ with the predictive random set $\S$.  Specifically, one obtains a random subset of $\Theta$:
\begin{equation}
\label{eq:post.focal}
\Theta_y(\S) = \bigcup_{u \in \S} \Theta_y(u). 
\end{equation}
The intuition behind this is as follows: if one believes that $\S$ contains the value of $U$ corresponding to the observed $Y=y$ and the true $\theta$, which is justified by \eqref{eq:prs.valid}, then one must also believe, with equal conviction, that $\Theta_y(\S)$ contains the true $\theta$.  The IM output is the distribution of the random set $\Theta_y(\S)$, which I will summarize with a plausibility function.  Specifically, if $A \subset \Theta$, then the plausibility function at $A$ is 
\[ \pl_y(A) = \prob_\S\{\Theta_y(\S) \cap A \neq \varnothing\}. \]
Of course, the plausibility function depend on $\S$ or, more precisely, on $\prob_\S$, but I omit this dependence in the notation.  For interpretation, $\pl_y(A)$ is a measure of the degree of belief, given data $y$, in the falsity of ``$\theta \not\in A$.''  The user's ``belief'' is first encoded in $\prob_\S$, a personal or belief probability, subject to the constraint \eqref{eq:prs.valid}, which is then transferred to the parameter space in the IM's C-step.  Intuitively, it is possible that two disjoint assertions are highly plausible based on the given data, and the plausibility function allows for this, i.e., plausibility satisfies $\pl_y(A) + \pl_y(A^c) \geq 1$ for all $A$.  Moreover, Theorem~2 in \citet{imbasics} shows that if $\S$ satisfies \eqref{eq:prs.valid}, then $\pl_Y(A)$ is properly calibrated as a function of $Y \sim \prob_{Y|\theta}$ for fixed $A$, in the sense that 
\begin{equation}
\label{eq:valid}
\sup_{\theta \in A} \prob_{Y|\theta}\bigl\{ \pl_Y(A) \leq \alpha \bigr\} \leq \alpha, \quad \forall \; \alpha \in (0,1), \quad \forall \; A \subseteq \Theta, 
\end{equation}
or, in other words, for any $A \subseteq \Theta$, if $\theta \in A$, then $\pl_Y(A) \stgeq \unif(0,1)$, as a function of $Y \sim \prob_{Y|\theta}$.  When \eqref{eq:valid} holds, the IM is said to be \emph{valid}.  This validity property aids in interpreting the plausibility function values---it puts the personal/belief probabilities on an objective $\unif(0,1)$ scale---and also facilitates the construction of IM-based decision rules with guaranteed error rate control.

\ifthenelse{1=1}{}{
The IM output is a belief and plausibility function pair, basically the distribution of $\Theta_y(\S)$.  Specifically, if $A \subset \Theta$, then the belief and plausibility functions at $A$, respectively, are 
\[ \bel_y(A) = \prob_\S\{\Theta_y(\S) \subseteq A\} \quad \text{and} \quad \pl_y(A) = 1 - \bel_y(A^c). \]
Of course, the belief and plausibility functions depend on $\S$ or, more precisely, on $\prob_\S$, but I omit this dependence in the notation.  For interpretation, $\bel_y(A)$ is a measure of the user's degree of belief, given data $y$, in the truthfulness of the assertion ``$\theta \in A$,'' and $\pl_y(A)$ is a measure of the degree of belief, given data $y$, in the falsity of ``$\theta \not\in A$.''  The user's ``belief'' is first encoded in $\prob_\S$, a personal or belief probability, subject to the constraint \eqref{eq:prs.valid}, which is then transferred to the parameter space in the IM's C-step.  Intuitively, belief in an assertion need not correspond to belief against its complement, and it is easy to see that the belief and plausibility functions meet this intuition, i.e., $\bel_y(A) \leq \pl_y(A)$ for all $A$.  Moreover, Theorem~2 in \citet{imbasics} shows that if $\S$ satisfies \eqref{eq:prs.valid}, then $\bel_Y(A)$ and $\pl_Y(A)$ are properly calibrated, as functions of $Y \sim \prob_{Y|\theta}$ for fixed $A$, in the sense that 
\begin{equation}
\label{eq:valid}
\sup_{\theta \in A} \prob_{Y|\theta}\bigl\{ \pl_Y(A) \leq \alpha \bigr\} \leq \alpha, \quad \forall \; \alpha \in (0,1), \quad \forall \; A \subseteq \Theta, 
\end{equation}
or, in other words, for any $A \subseteq \Theta$, if $\theta \in A$, then $\pl_Y(A) \stgeq \unif(0,1)$, as a function of $Y \sim \prob_{Y|\theta}$.  When \eqref{eq:valid} holds, the IM is said to be \emph{valid}.  Of course, since it holds for all $A$, validity can also be defined in terms of $\bel_y$.  This validity property aids in interpreting the belief and plausibility function values---it puts the subjective/belief probabilities on an objective $\unif(0,1)$ scale---and also facilitates the construction of IM-based decision rules with guaranteed error rate control.  
}

The conclusion I hope the reader will reach from this brief summary is that the IM approach is conceptually straightforward and accomplishes what Fisher's fiducial approach was meant to, namely, valid prior-free probabilistic inference.  The apparent cost is that the IM output depends on the choice of association \eqref{eq:basic.assoc}, the choice of predictive random set, and, in a less-obvious way, on the dimension of the auxiliary variable.  The need to specify an association, carry out the necessary dimension-reduction steps, and introduce a valid predictive random set may give the impression that the IM approach is not user-friendly.  The goal of this paper is to show how one can construct a valid IM by dealing with these challenges indirectly.

\section{A class of generalized IMs}
\label{S:gims}

\subsection{Construction}
\label{SS:construction}

Towards accomplishing the goals laid out above, we discuss here how the basic association \eqref{eq:basic.assoc} can be made simpler and more flexible, by relaxing the direct connection with the sampling model and informally reducing auxiliary variable dimension, while still retaining the desirable validity properties of the resulting IM.  

Start by going back to the beginning of Section~\ref{S:background} where only the sampling model $\prob_{Y|\theta}$ for data $Y$ given parameter $\theta$ is available.  The IM construction in Section~\ref{S:background} is based on identification of an unobservable auxiliary variable $U$ to associate with $(Y,\theta)$ and then to be predicted.  The basic approach identifies $U$ by thinking about the data-generating process, but this is potentially restrictive and unnecessary.  Rather than specifying a potentially relatively high-dimensional auxiliary variable corresponding to a data-generation process, and then subsequently reducing the dimension according to guidelines in \citet{imcond, immarg}, is it possible to specify an auxiliary variable of the appropriate dimension directly and easily?

Towards answering this question, the key insight is that the association in \eqref{eq:basic.assoc} need not involve the full data $Y$.  For a function $(y,\theta) \mapsto T_{y,\theta}$, consider a \emph{generalized association}
\begin{equation}
\label{eq:ga}
T_{Y,\theta} = a(\theta, U), \quad U \sim \prob_U, 
\end{equation}
where $U$ is some auxiliary variable taking values in a space $\UU$.  Note that, unless $y \mapsto T_{y,\theta}$ is one-to-one for each $\theta$, which is not a useful case, the generalized association does not determine the sampling model for $Y$, thereby relaxing the requirement in Section~\ref{S:background} that the association specify a version of the data-generating mechanism.  It does, however, determine the sampling model of $T_{Y,\theta}$ under $Y \sim \prob_{Y|\theta}$, so \eqref{eq:ga} is compatible with $\prob_{Y|\theta}$ in this sense.  The function $T_{Y,\theta}$ can depend on $\theta$ or not, and its distribution need not be continuous.  Some examples are discussed below and in the later sections.  

Based on \eqref{eq:ga}, the (generalized) A-step defines the set-valued mapping
\begin{equation}
\label{eq:gen.focal}
\Theta_y(u) = \{\theta: T_{y,\theta} = a(\theta,u)\}, \quad (y,u) \in \YY \times \UU. 
\end{equation}
Then the P- and C-steps can be carried out exactly like in Section~\ref{S:background}.  In particular, the P-step introduces a valid random set $\S \sim \prob_\S$ for predicting the unobserved value of $U$ in \eqref{eq:ga}, and the C-step yields the random set $\Theta_y(\S)$ as in \eqref{eq:post.focal} and the corresponding belief and plausibility functions $\bel_y$ and $\pl_y$, depending implicitly on $\prob_\S$.  I will call the resulting IM a \emph{generalized IM} and, interestingly, validity of this generalized IM, in the sense of \eqref{eq:valid}, follows immediately from the construction.  

\begin{thm}
\label{thm:valid}
For the generalized association \eqref{eq:ga}, let $\S \sim \prob_\S$ be a valid predictive random set for $U \sim \prob_U$.  If $\Theta_y(\S) \neq \varnothing$ with $\prob_\S$-probability~1 for all $y$, then the generalized IM is valid in the sense of \eqref{eq:valid}.  
\end{thm}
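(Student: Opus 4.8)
The plan is to reduce the validity of the generalized IM to the validity of the basic IM by observing that the generalized association \eqref{eq:ga} is, formally, an ordinary association in which the ``data'' is taken to be $T = T_{Y,\theta}$. First I would fix an assertion $A \subseteq \Theta$ and a parameter value $\theta \in A$, and unpack the definition of the plausibility function. Writing $t = T_{y,\theta}$ for the realized value of the generalized data, one has $\Theta_y(\S) = \bigcup_{u \in \S}\{\theta' : T_{y,\theta'} = a(\theta',u)\}$, and the key observation is that the true $\theta$ belongs to $\Theta_y(\S)$ precisely when there exists $u \in \S$ with $T_{y,\theta} = a(\theta,u)$; by \eqref{eq:ga}, the unobserved value $U$ corresponding to the observed $Y=y$ satisfies exactly $T_{y,\theta} = a(\theta,U)$, so $\theta \in \Theta_y(\S)$ whenever $\S \ni U$.

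Next I would chain this observation into the event $\{\pl_Y(A) \leq \alpha\}$. Since $\theta \in A$, the event $\Theta_y(\S) \cap A = \varnothing$ implies in particular $\theta \notin \Theta_y(\S)$, hence (by the previous paragraph, contrapositively) $\S \not\ni U$; therefore
\[
\pl_y(A) = \prob_\S\{\Theta_y(\S) \cap A \neq \varnothing\} \;\geq\; \prob_\S(\S \ni U) \;=\; f_\S(U),
\]
where $U$ here is the auxiliary value attached to the observed $y$ via \eqref{eq:ga}. This is the step where the nonemptiness hypothesis $\Theta_y(\S) \neq \varnothing$ a.s.\ matters: it guarantees there is no degenerate behavior (e.g.\ $\pl_y(A)$ dropping below $f_\S(U)$ because the union in \eqref{eq:post.focal} is empty for some $u\in\S$), so the bound $\pl_y(A) \geq f_\S(U)$ is genuinely valid pointwise in $y$. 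Consequently, as functions of $Y \sim \prob_{Y|\theta}$ — under which $U$ is distributed as $\prob_U$, since \eqref{eq:ga} is compatible with the sampling model — we get $\{\pl_Y(A) \leq \alpha\} \subseteq \{f_\S(U) \leq \alpha\}$, and the validity \eqref{eq:prs.valid} of the predictive random set gives $\prob_{Y|\theta}\{\pl_Y(A) \leq \alpha\} \leq \prob_U\{f_\S(U)\leq\alpha\} \leq \alpha$. Taking the supremum over $\theta \in A$ and then over $A \subseteq \Theta$ yields \eqref{eq:valid}.

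The main obstacle, such as it is, is conceptual rather than technical: one must be careful about what ``the unobserved value of $U$ corresponding to $Y=y$'' means when $y \mapsto T_{y,\theta}$ is many-to-one, and verify that the distributional statement ``$U \sim \prob_U$ when $Y \sim \prob_{Y|\theta}$'' is exactly the compatibility already noted after \eqref{eq:ga} — namely that $T_{Y,\theta} = a(\theta,U)$ has, under $\prob_{Y|\theta}$, the same law in $U$ as $\prob_U$ prescribes. Once that is pinned down the argument is essentially the same as the proof of Theorem~2 in \citet{imbasics}, and in fact slightly cleaner, because the generalized association already builds in $\theta$ on the left-hand side, so no separate step is needed to handle nuisance-parameter dependence.
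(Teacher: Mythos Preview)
Your proposal is correct and follows essentially the same route as the paper: fix $\theta\in A$ and a $u$ with $T_{y,\theta}=a(\theta,u)$, argue that $\S\ni u$ implies $\theta\in\Theta_y(\S)$ and hence $\pl_y(A)\geq \pl_y(\{\theta\})\geq f_\S(u)$, then invoke validity of $\S$ and take the supremum over $\theta\in A$. One small remark: the nonemptiness hypothesis is not actually what makes the bound $\pl_y(A)\geq f_\S(u)$ go through (that inclusion of events holds regardless); rather, it is there to ensure $\pl_y$ behaves as a genuine plausibility function, and the paper's own proof likewise does not invoke it explicitly.
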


\begin{proof}
For any $A$, let $(y,\theta,u)$ be such that $\theta \in A$ and $T_{y,\theta} = a(\theta,u)$.  Since $\{\theta\} \subset A$ and $\pl_y$ is monotone, we have 
\[ \pl_y(A) \geq \pl_y(\{\theta\}) = \prob_\S\{\Theta_y(\S) \ni \theta\} = \prob_\S\{\S \ni u\} = f_\S(u). \]
Since $f_\S(U) \stgeq \unif(0,1)$ as a function of $U \sim \prob_U$, it follows that $\pl_Y(A) \stgeq \unif(0,1)$ as a function of $Y \sim \prob_{Y|\theta}$, i.e., $\prob_{Y|\theta}\{\pl_Y(A) \leq \alpha\} \leq \alpha$, for all $\alpha \in (0,1)$.  This holds for all $\theta \in A$, so take supremum of the left-hand side over $\theta \in A$ to complete the proof.  
\end{proof}

Therefore, construction of a valid generalized IM is possible and seems to be fairly straightforward.  An important consequence of the validity theorem is that plausibility regions based on the generalized IM have the nominal coverage probability.  That is, if 
\[ \C_\alpha(y) = \{\theta: \pl_y(\theta) > \alpha\}, \]
where $\pl_y(\theta) = \pl_y(\{\theta\})$, then  
\[ \prob_{Y|\theta}\{\C_\alpha(Y) \ni \theta\} = \prob_{Y|\theta}\{\pl_Y(\theta) > \alpha\}, \]
and since the validity property \eqref{eq:valid} holds for all $A$, in particular, $A=\{\theta\}$, we get that the right-hand side in the above display is $\geq 1-\alpha$ for all $\theta$.  An important observation is that this does not require large samples or any assumptions on the model.

\subsection{A useful special case}
\label{SS:special}

There are, of course, a variety of ways one can specify the generalized association \eqref{eq:ga}.  Here I will elaborate on one simple but general strategy.  Let $(y,\theta) \mapsto T_{y,\theta}$ be scalar-valued, e.g., the likelihood ratio or a function thereof; in general, it is not a statistic because it depends on $\theta$.  Moreover, since the map is scalar-valued, in most cases, it cannot be one-to-one so it corresponds to a non-trivial summary of the data $y$.  Suppose that $T_{Y,\theta}$ has a continuous distribution, under $Y \sim \prob_{Y|\theta}$, and let $F_\theta$ be the corresponding distribution function.  Now specify an association in terms of the distribution of $T_{Y,\theta}$: 
\begin{equation}
\label{eq:Ty.assoc}
T_{Y,\theta} = F_\theta^{-1}(U), \quad U \sim \unif(0,1). 
\end{equation}
The case of discrete $T_{Y,\theta}$ can be handled similarly, i.e., 
\[ F_\theta(T_{Y,\theta}-) \leq U < F_\theta(T_{Y,\theta}), \quad U \sim \unif(0,1), \]
where $F_\theta(t-) = \lim_{s \uparrow t} F_\theta(s)$ is the left-hand limit.  This corresponds to taking $a(\theta,u)$ in \eqref{eq:ga} to be $F_\theta^{-1}(u)$.  Now, with a suitable predictive random set for $U \sim \unif(0,1)$, this generalized association leads to a valid generalized IM.  

\begin{cor}
The generalized IM constructed based on the association \eqref{eq:Ty.assoc} and a valid predictive random set $\S$ for $U \sim \unif(0,1)$ is valid in the sense of Theorem~\ref{thm:valid}, provided that $\Theta_y(\S) \neq \varnothing$ with $\prob_\S$-probability~1 for all $y$.  
\end{cor}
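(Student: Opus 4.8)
The plan is to observe that \eqref{eq:Ty.assoc} is nothing more than a particular instance of the generalized association \eqref{eq:ga}, so that the corollary is an immediate application of Theorem~\ref{thm:valid}. First I would make the identification explicit: take $\UU = (0,1)$, $\prob_U = \unif(0,1)$, and $a(\theta,u) = F_\theta^{-1}(u)$, where $F_\theta^{-1}(u) = \inf\{t : F_\theta(t) \geq u\}$; then \eqref{eq:Ty.assoc} has exactly the form required by \eqref{eq:ga}, and the set-valued map \eqref{eq:gen.focal} of the generalized IM is $\Theta_y(u) = \{\theta : T_{y,\theta} = F_\theta^{-1}(u)\}$.

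Next I would check the two hypotheses of Theorem~\ref{thm:valid}. The first, that $\S$ is a valid predictive random set for $U \sim \prob_U$, holds because $\prob_U = \unif(0,1)$ and validity of $\S$ for $\unif(0,1)$ is assumed in the statement; the second, that $\Theta_y(\S) \neq \varnothing$ with $\prob_\S$-probability one for all $y$, is likewise assumed. Theorem~\ref{thm:valid} then delivers validity in the sense of \eqref{eq:valid}, which is exactly the claim.

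The only point requiring genuine (but routine) verification is that \eqref{eq:Ty.assoc} is a legitimate generalized association, i.e., that under $Y \sim \prob_{Y|\theta}$ there is some $U \sim \unif(0,1)$ with $T_{Y,\theta} = F_\theta^{-1}(U)$. When $F_\theta$ is continuous this is the probability integral transform: $U := F_\theta(T_{Y,\theta}) \sim \unif(0,1)$ under $Y \sim \prob_{Y|\theta}$, and $F_\theta^{-1}(U) = T_{Y,\theta}$ almost surely. For the discrete case in the display after \eqref{eq:Ty.assoc}, one instead takes $U \mid \{T_{Y,\theta} = t\} \sim \unif\bigl(F_\theta(t-), F_\theta(t)\bigr)$, whose marginal is $\unif(0,1)$, and checks that $F_\theta^{-1}(U) = T_{Y,\theta}$ off the $\prob_U$-null set where $U$ equals one of the countably many atom boundaries. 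So no step here is a real obstacle; the work is entirely in matching \eqref{eq:Ty.assoc} to the template \eqref{eq:ga} and then invoking Theorem~\ref{thm:valid}.
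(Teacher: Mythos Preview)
Your approach is correct and matches the paper's: the paper gives no explicit proof for this corollary, treating it as an immediate specialization of Theorem~\ref{thm:valid} once one recognizes that \eqref{eq:Ty.assoc} is an instance of \eqref{eq:ga} with $a(\theta,u)=F_\theta^{-1}(u)$. Your additional paragraph verifying the probability integral transform (and its discrete analogue) is more detail than the paper supplies, but it is correct and does no harm.
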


This provides a simple and general procedure for constructing a valid generalized IM based on a choice of mapping $T_{y,\theta}$.  In fact, this shows that the work done by \citet{plausfn} in the frequentist context is just a special case of the proposed generalized IM framework.  His choice to work primarily with the negative log-likelihood ratio, 
\begin{equation}
\label{eq:lrt}
T_{y,\theta} = -2\log \frac{L_y(\theta)}{\sup_{\vartheta \in \Theta} L_y(\vartheta)}, 
\end{equation}
with $L_y$ the likelihood function for $\theta$ based on data $Y=y$.  The $T_{y,\theta}$ in\eqref{eq:lrt} is the deviance used frequently in \citet{schweder.hjort.book}.  Also, other authors, e.g., \citet{wasserman1990b}, \citet{aickin2000}, and \citet{denoeux2014}, have used the likelihood ratio to construct a plausibility function for statistical inference, but in a different way than I propose here.  There are, however, other choices of $T_{y,\theta}$; see, e.g., Remark~\ref{re:efficiency} and Section~\ref{S:examples}. 

A natural question is if anything is gained from the generalized IM perspective, besides the apparent simplicity, compared to the basic IM approach described in Section~\ref{S:background} and the references therein.  The next example demonstrates that the simple generalized IM can lead to improved efficiency, at least in some cases.  

\begin{ex}
\label{ex:gamma}
Let $Y_1,\ldots,Y_n$ be iid samples from a $\gam(\theta_1,\theta_2)$ distribution, where $\theta_1$ is the shape parameter and $\theta_2$ is the scale parameter, both unknown.  This same problem was considered \citet[][Section~5.3]{imcond} and they presented a basic IM solution based on a reduction to the complete sufficient statistic.  This requires specifying a predictive random set for a two-dimensional auxiliary variable consisting of two independent uniforms.  No IM optimality results are available for this problem, so they made the natural choice of a square-shaped predictive random set.  This guarantees validity of the IM, but efficiency is a question.  For comparison, consider a generalized IM based on the likelihood ratio, which is also valid; the computational details are discussed in Section~\ref{SS:computation}.  I simulate $n=25$ observations from the gamma distribution with $\theta_1=7$ and $\theta_2=3$.  Figure~\ref{fig:gamma} displays several results: the Jeffreys prior Bayesian posterior samples, the 90\% confidence ellipse based on asymptotic normality of the maximum likelihood estimator, the 90\% confidence region based on the asymptotic chi-square distribution of the deviance \citep{schweder.hjort.book}, the 90\% plausibility region based on the IM construction in \citet{imcond}, and the 90\% plausibility region based on the likelihood ratio-based generalized IM.  Interestingly, the generalized IM plausibility region has guaranteed 90\% coverage and it captures the overall shape of the posterior, which is non-elliptical.  It is also slightly smaller than the deviance-based region and is considerably smaller than the basic IM plausibility region.  
Another exact confidence region for this gamma problem is obtained in \citet{taraldsen.lindqvist.2013}.  
\end{ex}

\begin{figure}
\begin{center}
\scalebox{0.6}{\includegraphics{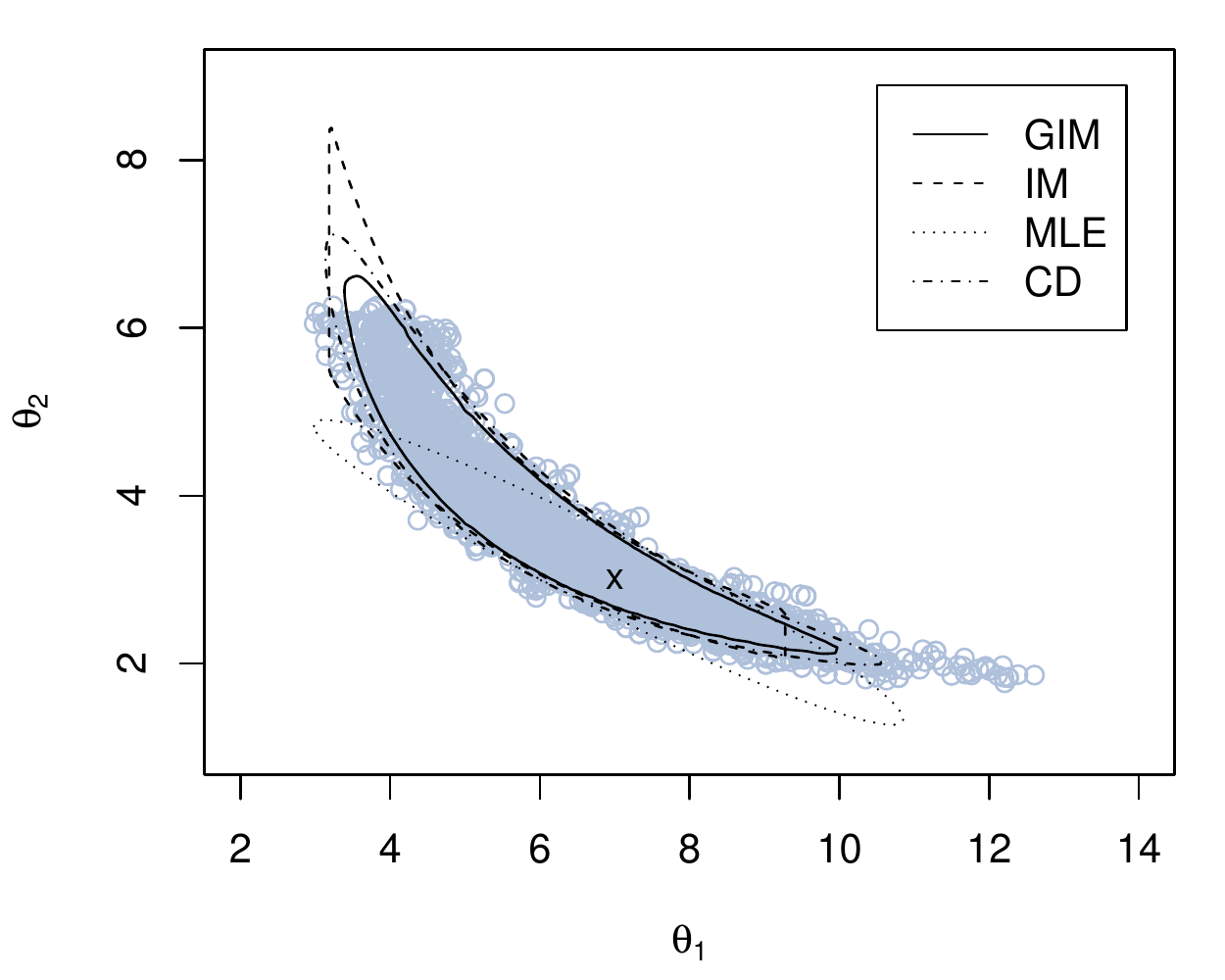}}
\end{center}
\caption{Output from the gamma simulation in Example~\ref{ex:gamma}: Jeffreys prior Bayes posterior samples (gray); maximum likelihood-based 90\% confidence ellipse (dotted); 90\% deviance-based confidence region (dot-dashed); and 90\% plausibility regions based on basic (dashed) and generalized (solid) IMs.}
\label{fig:gamma}
\end{figure}

\subsection{Remarks}
\label{SS:remarks}

\begin{remark}[on asymptotics]
\label{re:asymptotics}
To make this discussion concrete, consider the case where $Y$ consists of a collection of $n$ iid observations.  When $n$ is large, there is no shortage of pivotal quantities $T_{Y,\theta}$ that can be used in the generalized association \eqref{eq:Ty.assoc}.  Indeed, Wilks's theorem says that $T_{Y,\theta}$ in \eqref{eq:lrt} has an asymptotic chi-square distribution under $\prob_{Y|\theta}$, as $n \to \infty$.  In this case the $F_\theta$ in \eqref{eq:Ty.assoc} can, asymptotically, be taken as a suitable chi-square distribution function, free of $\theta$.  The same holds in the case with nuisance parameters using a profile likelihood, as in \eqref{eq:profile}.  There are many other choices of $T_{Y,\theta}$ that are asymptotic pivots, e.g., the quantities in \citet[][Chap.~8]{brazzale.davison.reid.2007} with higher-order approximation accuracy.  The point is that the generalized IM framework provides a tool for valid statistical inference without appealing to asymptotics but, if desired, asymptotic theory can be used just to provide simple large-sample approximations.
\end{remark}

\begin{remark}[on confidence distributions]
\label{re:confidence}
Confidence distributions \citep{xie.singh.2012, schweder.hjort.2002, schweder.hjort.book, singh.xie.strawderman.2007} have received considerable attention recently, especially in the meta-analysis context \citep{liu.liu.xie.2014, yang.liu.liu.xie.2014, claggett.xie.tian.2014, liu.liu.xie.2015, xie.singh.strawderman.2011}, a primary selling point being that it ``unifies'' \citep[][p.~3]{xie.singh.2012} existing approaches.  Their point is that a variety of standard tools can be converted into a confidence distribution or an asymptotic confidence distribution.  My proposal here for a generalized IM can be interpreted similarly, since many familiar ideas from classical statistics can be employed to construct a valid generalized IM.  
\end{remark}

\begin{remark}[on efficiency and choice of $T_{Y,\theta}$]
\label{re:efficiency}
Towards an optimal IM, \citet{imbasics} suggested that, for a fixed $\theta_0$, the best random set $\S$ is one that makes $\pl_Y(\theta_0)$ as stochastically as small as possible, subject to the validity condition.  They argue that there exists a nested collection $\YY_\alpha \subset \YY$, depending on $\S$ and $\theta_0$, such that $\pl_y(\theta_0) > \alpha$ if and only if $y \in \YY_\alpha$ and, furthermore, the optimal $\S$ has corresponding $\YY_\alpha$ such that 
\[ \int_{\YY_\alpha} S_\theta(y) \, p_\theta(y) \,dy = 0 \quad \text{at $\theta=\theta_0$ for all $\alpha$}, \]
where $p_\theta$ is the density function for $Y$ and $S_\theta(y) = (\partial/\partial\theta) \log p_\theta(y)$ is the familiar score function.  Since $\E_{\theta_0}\{S_{\theta_0}(Y)\} = 0$, this condition implies that $\YY_\alpha$ is suitably balanced with respect to the distribution of $S_{\theta_0}(Y)$; this is called a \emph{score-balance condition}.  A set that will satisfy the score-balance condition, at least asymptotically, is 
\[ \YY_\alpha = \{y: S_{\theta_0}(y)^\top I(\theta_0)^{-1} S_{\theta_0}(y) \leq c_\alpha\} \]
for suitable constant $c_\alpha$, where $I(\theta)$ is the Fisher information.  This suggests choosing 
\[ T_{Y,\theta_0} = S_{\theta_0}(Y)^\top I(\theta_0)^{-1} S_{\theta_0}(Y), \]
and the corresponding plausibility function matches (asymptotically) the p-value of Rao's score test, which has certain optimality properties.  This provides some insight into the choice of an efficient mapping $T_{y,\theta}$, but more work is needed.  How the optimality considerations in the confidence distribution context \citep[e.g.,][Ch.~5]{schweder.hjort.book} might be useful in the IM context deserves further investigation.  
\end{remark}

\section{Practical considerations}
\label{S:practical}

\subsection{Computation}
\label{SS:computation}

For the case \eqref{eq:Ty.assoc}, suppose that large values of $T_{y,\theta}$ are suggestive that the model $\prob_{Y|\theta}$ does not fit data $Y=y$ well.  The log-likelihood ratio in \eqref{eq:lrt}, the score-balanced cased in Remark~\ref{re:efficiency}, among others, are of this form.  In this case, a natural choice of the random set $\S$ is the one-sided (nested) random interval
\[ \S = [0,U], \quad U \sim \unif(0,1). \]
With this choice, 
\begin{align*}
\Theta_y(\S) \cap A \neq \varnothing & \iff \{\theta: F_\theta(T_{y,\theta}) \leq U\} \cap A \neq \varnothing \\
& \iff \{U \geq F_\theta(T_{y,\theta}), \; \exists \; \theta \in A\} \\
& \iff \Bigl\{U \geq \inf_{\theta \in A} F_\theta(T_{y,\theta}) \Bigr\}
\end{align*}
and, therefore, the corresponding plausibility function is 
\begin{equation}
\label{eq:pl}
\pl_y(A) = \prob_\S\{\Theta_y(\S) \cap A \neq \varnothing \} = 1 - \inf_{\theta \in A} F_\theta(T_{y,\theta}) = \sup_{\theta \in A} \bar F_\theta(T_{y,\theta}), 
\end{equation}
where $\bar F_\theta = 1-F_\theta$ is the survival function.  Of course, for singleton assertions, no optimization is necessary.  The point is that evaluating the generalized IM plausibility function requires only some relatively simple probability calculations.  

In cases where the distribution function $F_\theta$ is not available in closed form, a conceptually simple Monte Carlo approximation is available:
\begin{equation}
\label{eq:monte1}
F_\theta(t) \approx \frac1M \sum_{m=1}^M I\{T_{Y^{(m)},\theta} \leq t\}, 
\end{equation}
where $\{Y^{(m)}: m=1,\ldots,M\}$ are independent copies of $Y \sim \prob_{Y|\theta}$.  Of course, if direct information about the distribution of $T_{Y,\theta}$ is available, e.g., that it depends only on some function of $Y$, then this can be used to avoid simulation of the entire $Y$.  This approach is straightforward, but can be time-consuming to implement because the plausibility function may need to be evaluated at many different $\theta$ values, and each requires its own Monte Carlo simulation.  This difficulty can be avoided if it were possible to simulate from $\prob_{Y|\theta}$ for only a single value of $\theta$.  One way this can be achieved is if it happens that the distribution of $T_{Y,\theta}$, under $Y \sim \prob_{Y|\theta}$, does not depend on $\theta$, i.e., $F_\theta \equiv F$.  This invariance property holds if $T_{Y,\theta}$ is itself a pivot, which can be arranged in some examples \citep[e.g.,][Sec.~2.4]{plausfn}.  More generally, an importance sampling strategy can be employed to approximate $F_\theta$ over a range of $\theta$ with only a single Monte Carlo sample.  Choose a fixed parameter value, say, $\hat\theta$, a suitable estimator, and rewrite \eqref{eq:monte1} as 
\[ F_\theta(t) \approx \frac1M \sum_{m=1}^M I\{T_{Y^{(m)},\theta} \leq t\} \frac{L_{Y^{(m)}}(\theta)}{L_{Y^{(m)}}(\hat\theta)}, \]
where, this time, $\{Y^{(m)}: m=1,\ldots,M\}$ are independent samples from $\prob_{Y|\hat\theta}$, and can be reused for different values of $\theta$.  This is reminiscent of parametric bootstrap \citep[e.g.,][]{davison.hinkley.1997}, and will have a much smaller computational cost compared to the naive Monte Carlo approximation in \eqref{eq:monte1}.  The two Monte Carlo strategies discussed here are extreme in the sense that the former takes a Monte Carlo sample for each $\theta$ while the latter takes only one Monte Carlo sample for a single $\hat\theta$.  Various middle ground strategies are also possible, e.g., take Monte Carlo samples for a fixed grid $\{\vartheta_1,\ldots,\vartheta_G\}$ of parameter values and do an importance sampling-based approximation of $F_\theta(t)$ using samples corresponding to grid point $\vartheta_g$ where $g=\arg\min_h\|\theta-\vartheta_h\|$.  

\begin{ex}
\label{ex:triangle}
An interesting non-standard example is the so-called asymmetric triangular distribution \citep[e.g.,][Example~11]{bergerbernardosun2009}, with density function
\[ p_\theta(y) = \begin{cases} 2y/\theta  & \text{if $0 \leq y \leq \theta$}, \\ 2(1-y) / (1-\theta)  & \text{if $\theta < y \leq 1$}, \end{cases} \]
where $\theta \in [0,1]$.  The density has a unique mode at $\theta$, but the density has a corner and is not differentiable there.  Consider making inference on $\theta$ based on an independent sample $Y=(Y_1,\ldots,Y_n)$.  This is a challenging problem because there is no non-trivial sufficient statistic and the formal Fisher information is not well-defined.  Constructing an efficient IM using the basic approach outlined in Section~\ref{S:background} is difficult because there is no clear strategy to reduce the dimension of the auxiliary variable.  However, a generalized IM for $\theta$ is readily available here using the likelihood ratio \eqref{eq:lrt} as in Section~\ref{SS:special}.  For a quick comparison of the generalized IM (based on importance sampling) and the confidence distribution based on the asymptotic chi-square distribution of the deviance in \citet{schweder.hjort.book}, data $Y$ of size $n=10$ is simulated from the triangular distribution with $\theta=0.3$.  Plots of the plausibility and confidence curves are shown in Figure~\ref{fig:triangle}.  The two curves have roughly the same shape, though the confidence curve is a bit tighter, a consequence of the overly optimistic asymptotic approximation.     
\end{ex}

\begin{figure}
\begin{center}
\scalebox{0.6}{\includegraphics{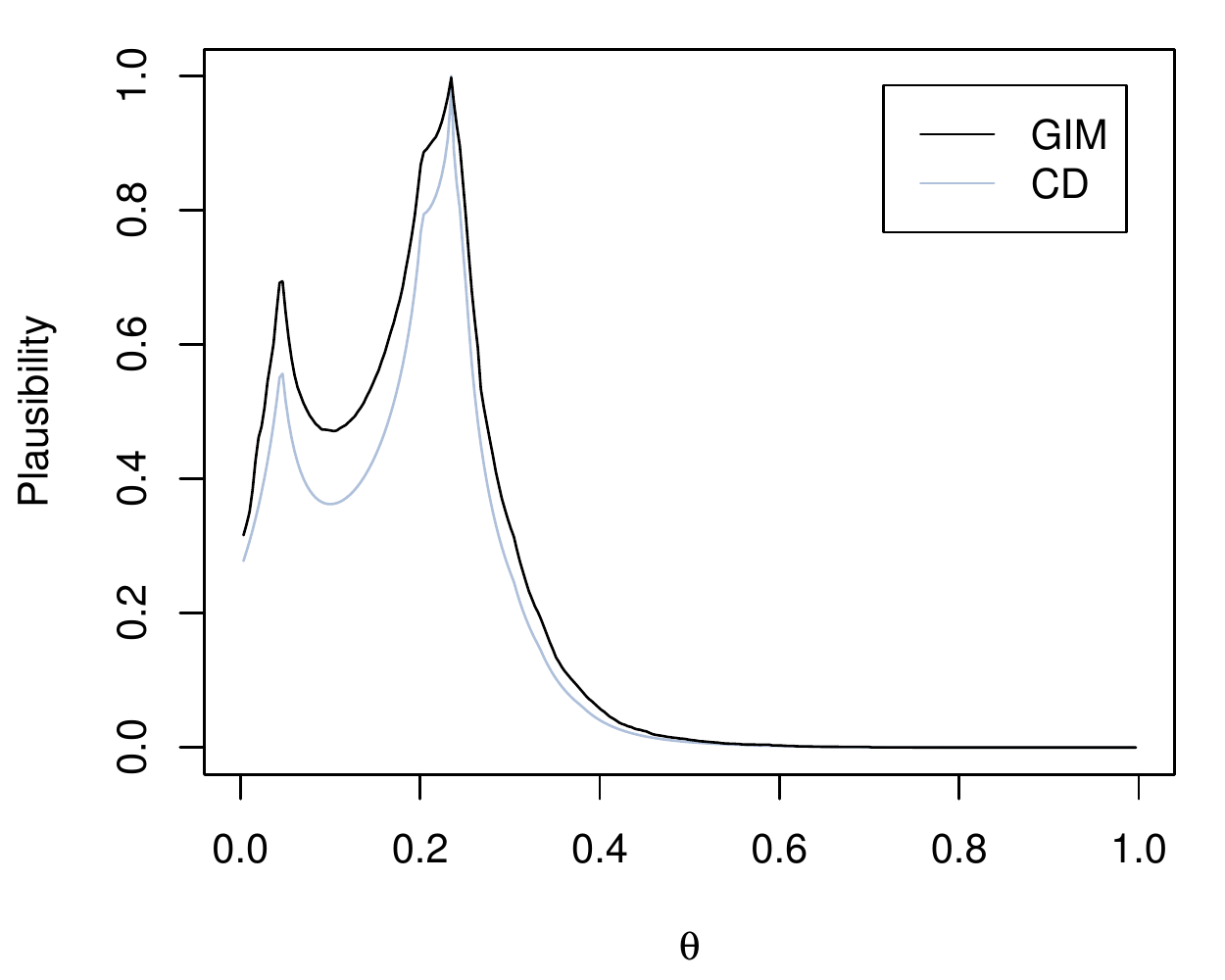}}
\end{center}
\caption{Plots of the plausibility function $\pl_y(\theta)$ for the triangular model in Example~\ref{ex:triangle} based on naive (dashed) and importance sampling-driven (solid) Monte Carlo.}
\label{fig:triangle}
\end{figure}

Though the context here is a bit different, the use of Monte Carlo methods to construct tests and confidence regions has been addressed previously in the literature.  For example, our $100(1-\alpha)$\% plausibility regions correspond to finding solutions to the equation $\pl_y(\theta) = \alpha$.  When the plausibility function can only be evaluated via Monte Carlo, solving this equation is a stochastic approximation problem \citep{robbinsmonro}, and has been discussed in \citet{garthwaite.buckland.1992} and \citet{botev.lloyd.2015}; see, also, \citet{bolviken.skovlund.1996}.  

Another issue to address is optimization of the function $F_\theta(T_{y,\theta})$ over a subset $A$ of $\theta$ values.  This will again be relevant in the discussion of marginalization below.  Recently, but again in a slightly different context, \citet{xiong.optim} considers this optimization problem and suggests some localization strategies as well as a proper choice of grid points, via space-filling designs, on which the plausibility function surface can be built up.

\subsection{Handling nuisance parameters}
\label{SS:nuisance}

Most practical problems involve nuisance parameters, so having some general techniques to eliminate these parameters is important.  Without loss of generality, partition the full parameter $\theta$ as $\theta=(\psi, \lambda)$, where $\psi$ is the interest parameter and $\lambda$ is the nuisance parameter.  Here I will discuss three different approaches for eliminating $\lambda$ in to construct a marginal generalized IM for $\psi$.  

A first strategy is conditioning.  In particular, let $y \mapsto (T_y, T_y')$ be a one-to-one transformation of $y$, independent of the parameter $\theta$.   If the conditional distribution of $T_y$, given $T_y'$, is free of $\lambda$, then this conditional distribution can be used to construct a generalized IM for $\psi$.  Section~\ref{SS:odds} presents an example of this conditioning strategy in action.   

The second strategy is a direct marginalization by selecting a function $T_{Y,\psi}$, depending on $Y$ and $\psi$ only, such that its distribution is free of the nuisance parameter $\lambda$.  A general candidate for such a function, generalizing the idea at the end of Section~\ref{S:gims}, is the profile likelihood ratio
\begin{equation}
\label{eq:profile}
T_{Y,\psi} = -2\log \frac{\sup_\lambda L_y(\psi, \lambda)}{\sup_{\psi, \lambda} L_y(\psi, \lambda)}. 
\end{equation}
Composite transformation models \citep{bn1988} form a general class of problems where this approach to marginalization can be applied.  For example, in the two-parameter gamma model, where $\psi$ is the shape, the Bartlett test statistic has distribution free of the nuisance scale parameter.  Similarly, in the bivariate normal model, where $\psi$ is the correlation, the sample correlation coefficient $\hat \psi$ has distribution free of $\lambda$, the means and variances; the profile likelihood is a function of only $\hat\psi$ and $\psi$ and, therefore, also has distribution free of $\lambda$.  A mixed-effects model, where the nuisance fixed-effect parameters are eliminated via marginalization, is presented in Section~\ref{SS:vc}.

A third strategy, which seems to be unique to the framework presented here, is a different form of marginalization via optimization.  When the underlying random sets are nested, which is the recommended choice, the plausibility function is called \emph{consonant} \citep[e.g.,][]{shafer1987}.  In particular, this means that the plausibility function evaluated at a set $A$ equals the suprema of the plausibility function evaluated at points in $A$.  This provides some further explanation for the expression for $\pl_y(A)$ in \eqref{eq:pl} involving a supremum.  This is relevant in the present situation because a problem that involves nuisance parameters can be handled by considering assertions about the full parameter $(\psi,\lambda)$ that span the full range of $\lambda$.  Therefore, marginalization can be accomplished by optimization \emph{after} evaluating the plausibility function, compared to the pre-plausibility evaluation optimization in the profiling approach discussed above.  This further demonstrates the importance of the optimization aspects discussed in Section~\ref{SS:computation}.  It is preferable to eliminate the nuisance parameters before evaluating plausibility, if possible, because it reduces the computational cost, but for some problems there are no obvious conditioning or profiling strategies to use, so this default marginalization tool is necessary.  

\section{Applications}
\label{S:examples}

\subsection{Odds ratio in a $2 \times 2$ table}
\label{SS:odds}

Let $Y=(Y_0,Y_1)$ be two independent binomial counts, with $Y_0 \sim \bin(n_0, \theta_0)$ and $Y_1 \sim \bin(n_1,\theta_1)$, where $n=(n_0,n_1)$ is known but $\theta=(\theta_0,\theta_1)$ is unknown.  Data such as these arise in, say, a clinical trial, where $Y_0$ and $Y_1$ correspond to the number of events observed under the control and treatment.  Suppose that the quantity of interest is the odds ratio
\[ \psi = \frac{\theta_1 / (1-\theta_1)}{\theta_0 / (1-\theta_0)}. \]
As in \citet{hannig.xie.2012}, a key observation is that the conditional distribution of $Y_1$, given $Y_0+Y_1$, depends on $\psi$ only, not on the nuisance parameter $\theta_0$ (or $\theta_1$), though the distribution form is not a standard one.  In particular, 
\[ \prob(Y_1 = y_1 \mid Y_0 + Y_1 = t) \propto \binom{n_1}{y_1} \binom{n_0}{t - y_1} \psi^{y_1},  \]
with $y_1$ ranging over $\max\{t-n_0, 0\}$ and $\min\{n_1,t\}$.  As discussed in Section~\ref{SS:nuisance}, let $T_Y = Y_1$ and $T_Y' = Y_0 + Y_1$.  For the observed value $t$ of $T_Y'$, let $F_{t,\psi}$ be the conditional distribution function corresponding to the mass function in the above display.  The resulting generalized association is 
\[ F_{t,\psi}(Y_1-1) \leq U < F_{t,\psi}(Y_1), \quad U \sim \unif(0,1), \]
and the A-step yields the sets 
\[ \Psi_y(u) = \{\psi: F_{t,\psi}(y_1-1) \leq u < F_{t,\psi}(y_1)\}, \quad u \in (0,1). \]
For predicting the value of this uniform auxiliary variable, a reasonable choice of predictive random set is the ``default'' \citep{imbasics} 
\begin{equation}
\label{eq:default.prs}
\S = \bigl[ 0.5 - |U - 0.5|, \, 0.5 + |U - 0.5| \bigr], \quad U \sim \unif(0,1). 
\end{equation}
Then the C-step combines $\Psi_y(\cdot)$ and $\S$ to get $\Psi_y(\S) = \bigcup_{u \in \S} \Psi_y(u)$.  Since 
\[ \Psi_y(\S) \not\ni \psi \iff F_{t,\psi}(y_1-1) > \sup\S \text{ or } F_{t,\psi}(y_1) < \inf \S, \]
we find that the corresponding plausibility function for singleton $\psi$ is 
\begin{align*}
\pl_y(\psi) & = \prob_\S\{\Psi_y(\S) \ni \psi \} \\
& = 1-\prob_U\{0.5 + |U-0.5| < F_{t,\psi}(y_1-1) \text{ or } 0.5 - |U-0.5| > F_{t,\psi}(y_1)\} \\
& = 1 - \prob_U\{|2U-1| < 2 F_{t,\psi}(y_1 - 1) - 1\} - \prob_U\{|2U-1| < 1 - F_{t,\psi}(y_1)\} \\ 
& = 1 - \{2 F_{t,\psi}(y_1 - 1) - 1\}^+ - \{1 - 2F_{t,\psi}(y_1)\}^+,  
\end{align*}
where the ``+'' superscript denotes the positive part.  The somewhat unusual form of this plausibility function is a result of the discreteness of the conditional distribution.  Some similar conditioning arguments are used in \citet{jin.li.jin.2015} to construct an IM for a different version of this discrete problem.  

For illustration, I consider two mortality data sets presented in Table~1 of \citet{normand1999}, namely, Trials~1 and 6.  Plausibility function for $\log \psi$ for the two data sets are displayed in Figure~\ref{fig:or}.  Both data sets have a relatively small numbers of events, and the two estimated odds ratios are similar: 2.27 in Trial~1 and 2.73 in Trial~6.  However, Trial~6 is an overall larger study, so the plausibility function is much more concentrated than that for Trial~1.  The flat peak is a result of the discreteness of the problem.  These plausibility function plots look quite different than those in Figure~2 \citet{hannig.xie.2012}, based on p-values from Fisher's exact test, in part because they make a certain correction to try to cancel out the effect of the discreteness.   

\begin{figure}
\begin{center}
\subfigure[Trial~1: $Y=(1,2)$, $n=(43, 39)$]{\scalebox{0.6}{\includegraphics{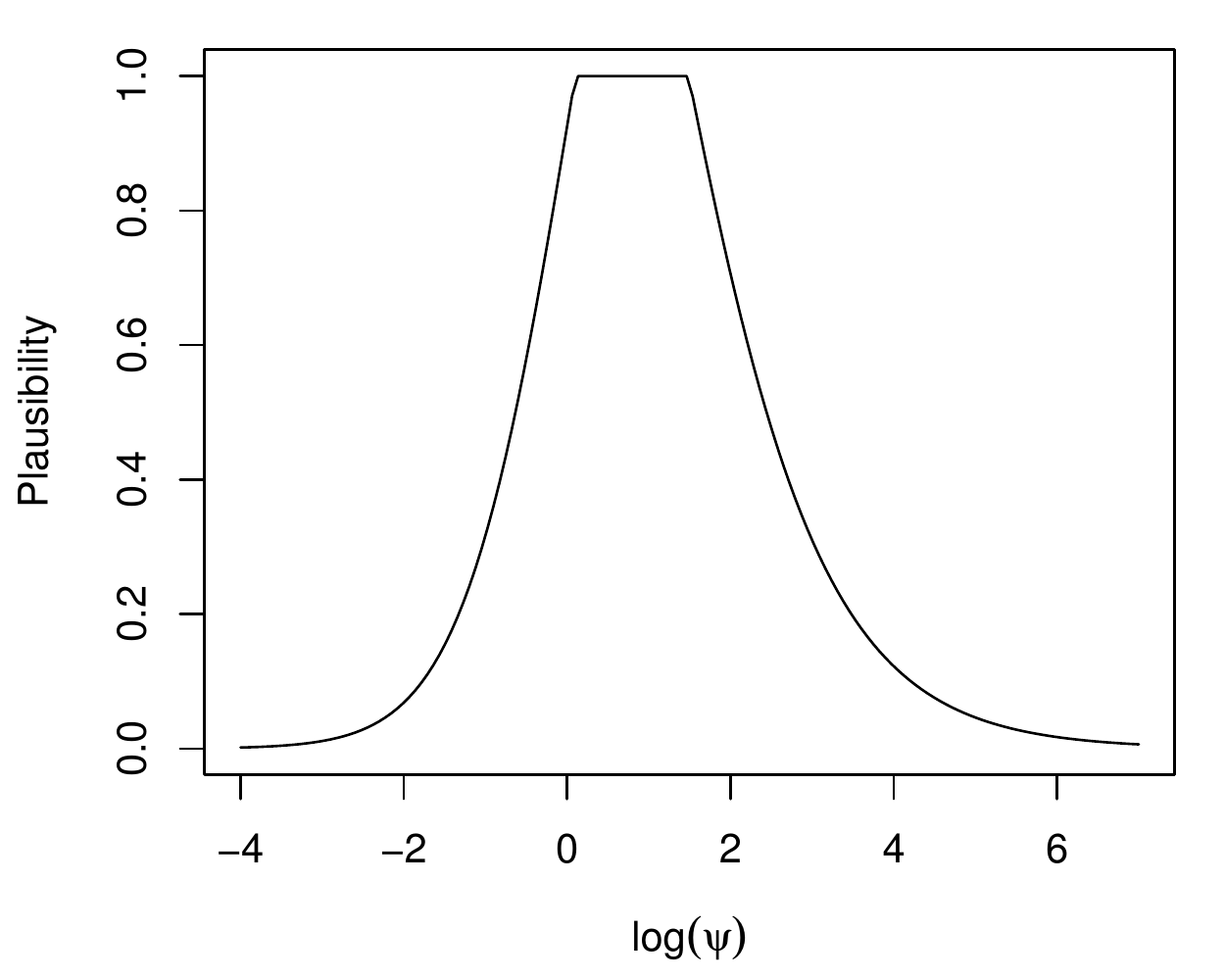}}}
\subfigure[Trial~6: $Y=(4,11)$, $n=(146, 154)$]{\scalebox{0.6}{\includegraphics{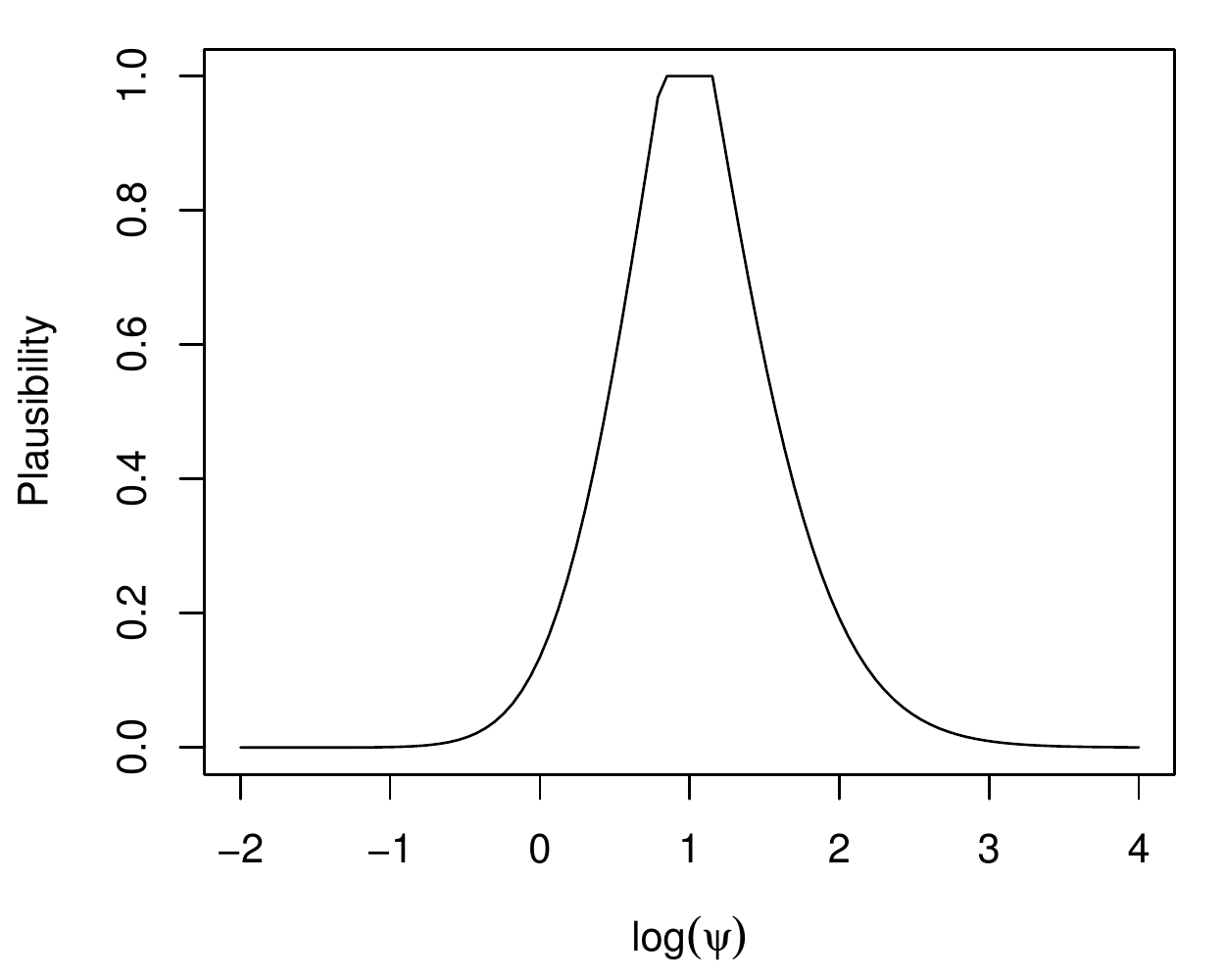}}}
\end{center}
\caption{Plausibility function for the log odds ratio in two mortality data sets (Trial~1 and Trial~6) presented in Table~1 of \citet{normand1999}.}
\label{fig:or}
\end{figure}


\subsection{Error variance in a mixed-effects model}
\label{SS:vc}

Consider a (possibly unbalanced) normal linear mixed effect model with two variance components, as in \citet{burch.iyer.1997}.  The model is written as 
\[ Y \sim \nm_n(X\beta, \sigma_\eps^2 I_n + \sigma_\alpha^2 Z A Z^\top), \]
where $X$ and $Z$ are $n \times p$ and $n \times a$ matrices of predictor variables, respectively, and the parameter is $\theta = (\beta, \sigma_\alpha^2, \sigma_\eps^2)$, with $\beta$ being the fixed-effect regression coefficients and $\avar$ and $\evar$ the random-effects and error variances, respectively.  Suppose that $\psi=\sigma_\eps^2$ is the parameter of interest and $\lambda=(\beta, \sigma_\alpha^2/\sigma_\eps^2)$ is a nuisance parameter.  Inference on $\psi$ is interesting from a theoretical point of view because, to my knowledge, there is no method available that can do this exactly; see, also, \citet[][p.~855]{e.hannig.iyer.2008}.  In what follows, I also assume that $X$ has full rank $p < n$ and that the matrix $A$, which describes the correlation structure in the random effects, is known.  

Following the setup in \citet{e.hannig.iyer.2008}, let $K$ be a $n \times (n-p)$ matrix such that $KK^\top = I_n - X(X^\top X)^{-1}X^\top$ and $K^\top K = I_{n-p}$.  It follows that 
\[ K^\top Y \sim \nm_{n-p}(0, \evar I_{n-p} + \avar G), \]
where $G = K^\top Z A Z^\top K$ is $(n-p) \times (n-p)$.  Let $e_1 > \cdots > e_L \geq 0$ denote the (distinct) eigenvalues of $G$ with multiplicities $r_1,\ldots,r_L$, respectively.  Let $P=[P_1,\ldots,P_L]$ be a $(n-p) \times (n-p)$ orthogonal matrix such that $P^\top G P$ is diagonal with eigenvalues $e_1,\ldots,e_L$, in their multiplicities, on the diagonal.  For $P_\ell$, a $(n-p) \times r_\ell$ matrix, define 
\[ S_\ell = Y^\top K P_\ell P_\ell^\top K^\top Y, \quad \ell=1,\ldots,L. \]
\citet{olsen.seely.birkes.1976} showed that $(S_1,\ldots,S_L)$ is a minimal sufficient statistic for $(\avar, \evar)$ and, moreover, its distribution is characterized by the equations 
\[ S_\ell = (\sigma_\alpha^2 e_\ell + \sigma_\eps^2) V_\ell, \quad \ell=1,\ldots,L, \]
where $V_1,\ldots,V_L$ are independent with $V_\ell \sim \chisq(r_\ell)$.  By making the transformation from $Y$ to $(S_1,\ldots,S_L)$, the nuisance fixed-effect parameter has been eliminated, as discussed in Section~\ref{SS:nuisance}.  With a slight abuse of notation, let $\lambda=\sigma_\alpha^2/\sigma_\eps^2$ be the remaining nuisance parameter.  Then the above equation can be rewritten as 
\[ S_\ell = \psi(\lambda e_\ell + 1) V_\ell, \quad \ell=1,\ldots,L. \]
In what follows, I propose a generalized IM for $\psi$ using some specialized tricks to eliminate the dependence on $\lambda$ as much as possible before full marginalization via optimization as discussed in Section~\ref{SS:nuisance}.  

Let $\L$ be a proper subset of $\{1,2,\ldots,L\}$, and write $H(\cdot \mid \lambda) = H_\L(\cdot \mid \lambda)$ for the distribution function of $\sum_{\ell \in \L} V_\ell(\lambda)$, a linear combination of independent chi-squares; here, $V_\ell(\lambda) \equiv (\lambda e_\ell + 1)V_\ell$.  Next, let $\hat\lambda(\cdot)$ be the function that defines maximum likelihood estimator of $\lambda$ based on observations from the distribution of $V_{-\L}(\lambda)$; like in the R software, the negative subscript means those indices are removed.  Define
\begin{equation}
\label{eq:t.vc}
T_{Y,\psi} = H\Bigl( \frac{1}{\psi} \sum_{\ell \in \L} S_\ell \bigmid \hat\lambda(S_{-\L}/\psi) \Bigr)
\end{equation}
and
\begin{equation}
\label{eq:z.vc}
Z = H\Bigl( \sum_{\ell \in \L} V_\ell(\lambda) \bigmid \hat\lambda(V_{-\L}(\lambda)) \Bigr) 
\end{equation}
and consider the generalized association 
\begin{equation}
\label{eq:ga.vc}
T_{Y,\psi} = F_\lambda^{-1}(U), \quad U \sim \unif(0,1), 
\end{equation}
where $F_\lambda$ is the distribution function of $Z$ in \eqref{eq:z.vc}.  Note that if $\hat\lambda(V_{-\L}(\lambda))$ were exactly equal to $\lambda$, then $Z$ would be $\unif(0,1)$, and the problematic dependence on the nuisance parameter $\lambda$ would be eliminated.  However, it is too much to expect that $\hat\lambda(\cdot)$ will exactly equal $\lambda$, so the dependence on $\lambda$ remains, at least for small samples.  For the generalized association \eqref{eq:ga.vc}, an appropriate predictive random set for $U$ is the ``default'' $\S$ used above.  Then the construction of the generalized IM for $(\psi,\lambda)$ is straightforward.  Elimination of $\lambda$ will be carried out by optimizing over $\lambda$ as discussed in Section~\ref{SS:nuisance}.  

For illustration, I will revisit an example presented in \citet[][Section~4.1]{burch.iyer.1997} and \citet[][Section~5.2]{e.hannig.iyer.2008}, where $L=165$, the $e$'s range from $e_1=8.56$ to $e_L=0.57$, and each $r_\ell=1$ except $r_{105}=2$.  Following \citet{burch.iyer.1997}, I take $\L=\{82,\ldots,165\}$.  Figure~\ref{fig:cdf}(a) shows plots of the distribution function $F_\lambda$ for $\lambda \in \{0.1, 1, 10, 100\}$.  This shows that the distribution depends on $\lambda$, but maybe not too much.  A marginal plausibility interval for $\psi$, based on this generalized IM, can be obtained by setting $G(\psi) \equiv T_{Y,\psi}$ equal to each of the extreme 2.5\% quantiles---optimized over $\lambda$---and solving for the corresponding $\psi$; see, also, \citet{xiong.optim}.  A plot of $G(\psi)$ for these data is shown in Figure~\ref{fig:cdf}(b).  In this case, the 95\% plausibility interval for $\psi$ is $(0, 3.22)$, which is similar to, but shorter than, the fiducial interval given in \citet{e.hannig.iyer.2008}.  To check the claimed validity, 2000 independent data sets are simulated by plugging in the maximum likelihood estimator of $(\psi,\lambda)$.  The coverage probability of the 95\% marginal plausibility interval is 0.947 and the average length is 3.31.  

\begin{figure}
\begin{center}
\subfigure[Distribution function $F_\lambda$ of $Z$ in \eqref{eq:z.vc}]{\scalebox{0.6}{\includegraphics{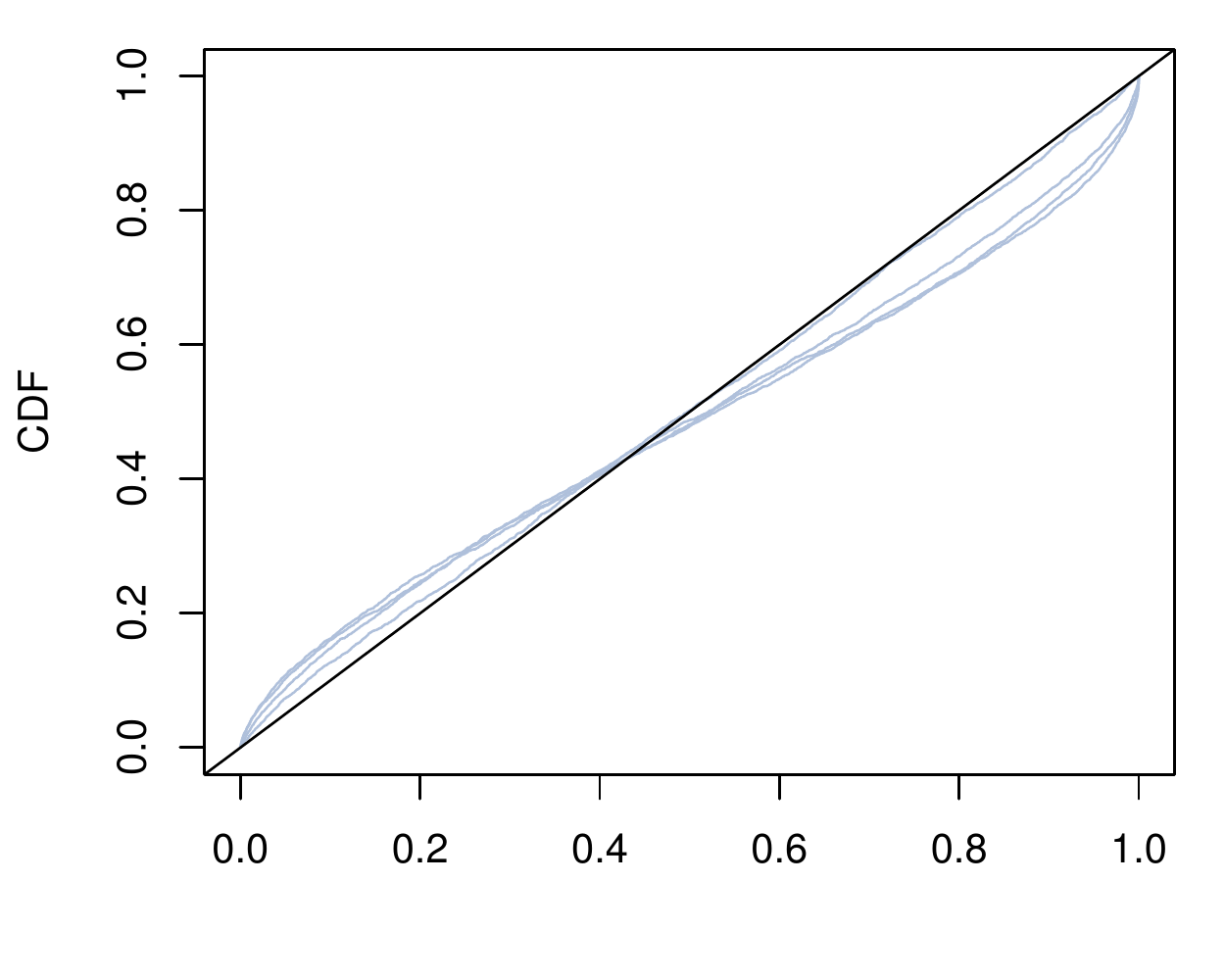}}}
\subfigure[Function $G(\psi) = T_{Y,\psi}$ in \eqref{eq:t.vc}]{\scalebox{0.6}{\includegraphics{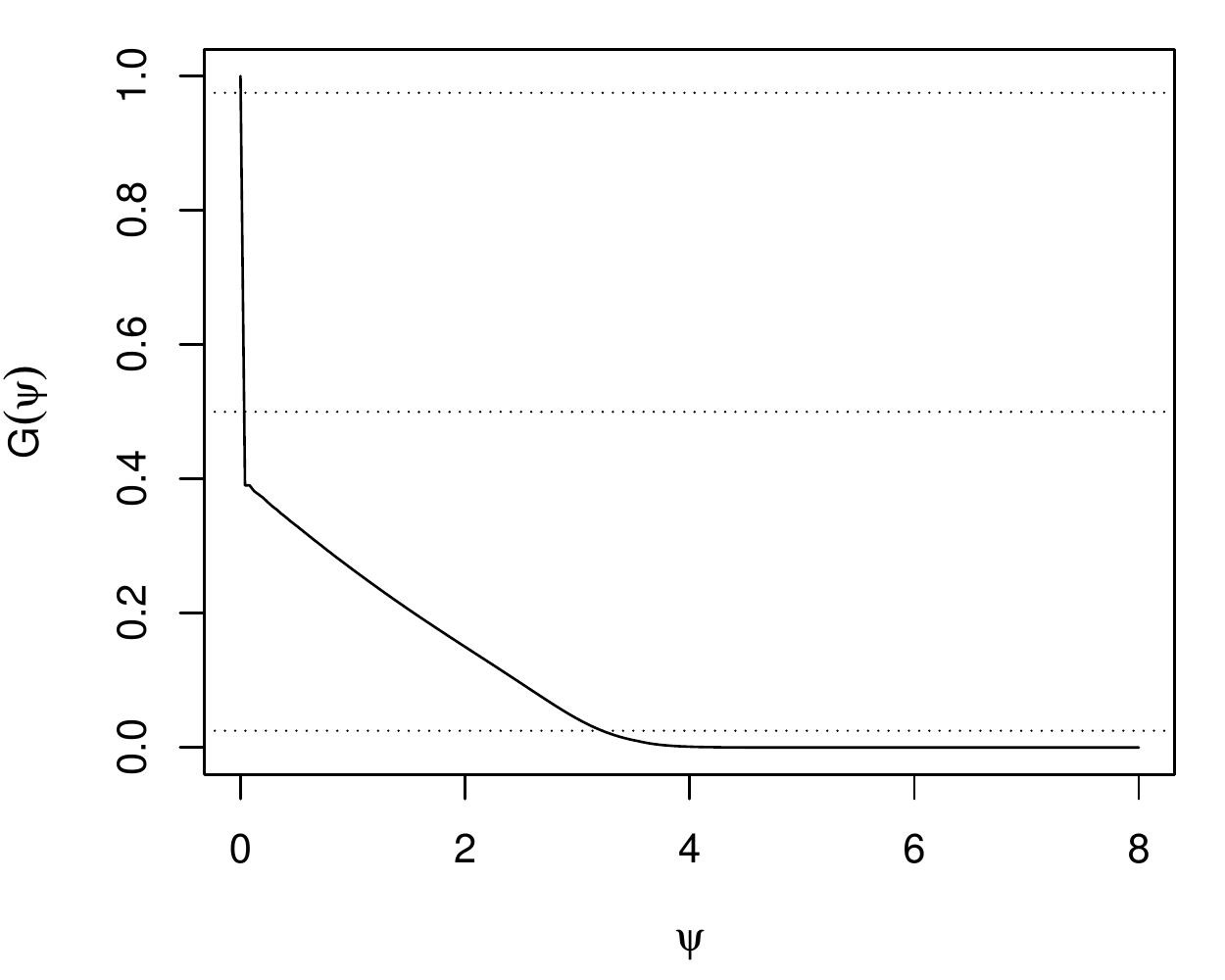}}}
\end{center}
\caption{Functions related to \eqref{eq:t.vc} and \eqref{eq:z.vc}, with $\L=\{82,\ldots,165\}$ and the data in \citet[][Section~4.1]{burch.iyer.1997}; in Panel~(a), $\lambda$ ranges over $\{0.1, 1, 10, 100\}$.}
\label{fig:cdf}
\end{figure}

The fiducial interval being compared to is of high quality \citep{e.hannig.iyer.2008}, so the fact that this generalized IM approach is competitive is quite promising.  Theoretically, the validity result holds, but computation is still a challenge.  For one thing, the method of \citet{imhof1961} used to evaluate $G(\psi)$, as implemented in the {\tt CompQuadForm} package in R, is a bit unstable when $\psi$ is close to zero.

\section{Discussion}
\label{S:discuss}

Previous work on IMs might give the impression that the approach is rigid in its dependence on a version of the data-generating process and, overall, not user-friendly.  In this paper, I have proposed a generalized version of the IM framework that is more flexible in a variety of ways.  In particular, it makes the IM approach more accessible by seamlessly incorporating some of the more familiar ideas from classical statistics.  This added flexibility does not require a sacrifice in terms of the IM's general validity property and, moreover, at least in certain cases, it leads to improved efficiency.  

There are at least two important questions that remain to be addressed.  First, what is an ``optimal'' choice of the mapping $T_{y,\theta}$?  Some simple ideas were presented in Remark~\ref{re:efficiency} but more work is needed.  Second, does this proposed strategy that collapses the problem down to one involving a scalar auxiliary variable work well even in high-dimensional problems?  It is likely that this extreme of dimension-reduction will result in a loss of efficiency when the problem is sufficiently complex, but this has yet to be investigated.

\section*{Acknowledgments}

The author thanks the Guest Editors, Professors Nils Hjort and Tore Schweder, for the opportunity to submit a paper for this special \emph{JSPI} issue, and to Professor Chuanhai Liu and the anonymous reviewers for some very helpful comments on a previous draft.  


\ifthenelse{1=1}{}{
\appendix

\section{Towards generalized IM efficiency}
\label{SS:score}

A notion of optimal predictive random sets was put forth in \citet{imbasics}.  At a high level, subject to a validity constraint, a predictive random set is optimal if it makes the plausibility function as stochastically small as possible.  For concreteness, take a fixed $\theta_0$ and let $\pl_y(\theta_0)$ denote the plausibility function evaluated at the singleton assertion $A=\{\theta_0\}$ based on the generalized IM in Section~\ref{SS:construction}.  \citet{imbasics} suggest to choose a valid (and nested) random set $\S$ such that 
\[ \psi_\alpha(\theta) = \prob_{Y|\theta}\{\pl_Y(\theta_0) > \alpha\} \]
takes its maximum value, $1-\alpha$, at $\theta=\theta_0$ for each $\alpha$.  This condition forces validity at $\theta=\theta_0$ but, for $\theta\neq\theta_0$, the plausibility function will tend to take smaller values, thereby also making the corresponding plausibility region small.  In the present context, these efficiency considerations will provide some clues about what makes a good choice of $g_\theta(y)$ in the generalized association \eqref{eq:ga}.  

Under the present setup, it is clear that there is a nested sequence $\YY_\alpha \subset \YY$, depending on $\S$ and $\theta_0$ such that $\{\pl_y(\theta_0; \S) > \alpha\}$ if and only if $y \in \YY_\alpha$.  Therefore, maximizing $\psi_\alpha(\theta)$ is equivalent to maximizing $\int_{\YY_\alpha} p_\theta(y) \,dy$, where $p_\theta$ is the density of $\prob_{Y|\theta}$.  Setting the first derivative equal to zero, under standard regularity conditions, one gets that 
\[ \int_{\YY_\alpha} S_\theta(y) \, p_\theta(y) \,dy = 0 \quad \text{at $\theta=\theta_0$ for all $\alpha$}, \]
where $S_\theta(y) = (\partial/\partial\theta) \log p_\theta(y)$ is the familiar score function.  Since $\E_{\theta_0}\{S_{\theta_0}(Y)\} = 0$, this condition shows that the set $\YY_\alpha$ needs to be suitably balanced with respect to the distribution of $S_{\theta_0}(Y)$.  Once $\YY_\alpha$ satisfying this \emph{score-balance condition} is identified, it can be traced back to find the corresponding mapping $g_\theta(y)$, etc.  

Here we justify the claim that that $\psi_\alpha(\theta)$ takes its maximum value under the score-balance condition.  To see this, differentiate $\psi_\alpha(\theta)$ at $\theta=\theta_0$ under the integral a second time to get 
\[ -\int_{\YY_\alpha} \{J_\theta(y) - S_\theta(y) S_\theta(y)^\top\} p_{\theta_0}(y) \,dy, \]
where $J_\theta(y)$ is the observed Fisher information; assume, for simplicity, that $J_{\theta_0}(Y)$ is positive definite with $\prob_{Y|\theta_0}$-probability~1.  Under the score-balance condition, if this second derivative matrix were negative definite, then $\psi_\alpha(\theta)$ would be (locally) maximized at $\theta=\theta_0$, as desired.  It suffices to show that 
\[ \int_{\YY_\alpha} \{I_d - J_{\theta_0}(y)^{-1/2} S_{\theta_0}(y) S_{\theta_0}(y)^\top J_{\theta_0}(y)^{-1/2}\} p_{\theta_0}(y) \,dy \quad \text{is positive definite}. \]
By Weyl's eigenvalue theorem {\color{blue} [ref?]}, it is enough to show that 
\[ \int_{\YY_\alpha} S_{\theta_0}(y)^\top J_{\theta_0}(y)^{-1} S_{\theta_0}(y) p_{\theta_0}(y) \,dy < 1. \]
{\color{blue}The integral of the quadratic form over all of $\YY$ is (approximately) 1, so restricting the integration to $\YY_\alpha$, where the quadratic form is small, gives a value less than 1.  Therefore, $\psi_\alpha(\cdot)$ is locally maximized at $\theta_0$, so the shape of $\YY_\alpha$ above is near optimal.  



Note that, given $\YY_\alpha$, it is not necessary to work backwards to identify the corresponding predictive random set---the calculations can be done directly in terms of the score function.  Indeed, the plausibility function in this case is 
\[ \pl_y(\theta_0) = \prob_{Y|\theta_0}\{S_{\theta_0}(Y)^\top J_{\theta_0}(Y)^{-1}S_{\theta_0}(Y) \geq S_{\theta_0}(y)^\top J_{\theta_0}(y)^{-1} S_{\theta_0}(y)\}. \]
The plausibility function above corresponds to the p-value for Rao's score test of $H_0: \theta=\theta_0$ \citep{impval} and, since this test is asymptotically efficient under certain regularity, one can conclude that the predictive random set constructed here is at least approximately optimal.  
}

{\color{red} Here we started from just a model and derived the plausibility function directly, i.e., no mention of a data-generating mechanism or a predictive random set is needed, and we still get a valid and, in some sense, optimal IM...!}

{\color{red} Example: bivariate standard normal correlation, with ``optimal score-balanced'' association...} 
}

\bibliographystyle{apalike}
\bibliography{/Users/rgmartin/Dropbox/Research/mybib}

\begin{thebibliography}{}

\bibitem[Aickin, 2000]{aickin2000}
Aickin, M. (2000).
\newblock Connecting {D}empster--{S}hafer belief functions with
  likelihood-based inferences.
\newblock {\em Synthese}, 123(3):347--364.

\bibitem[Barnard, 1995]{barnard1995}
Barnard, G.~A. (1995).
\newblock Pivotal models and the fiducial argument.
\newblock {\em Int. Statist. Rev.}, 63(3):309--323.

\bibitem[Barndorff-Nielsen, 1988]{bn1988}
Barndorff-Nielsen, O.~E. (1988).
\newblock {\em Parametric Statistical Models and Likelihood}, volume~50 of {\em
  Lecture Notes in Statistics}.
\newblock Springer-Verlag, New York.

\bibitem[Barndorff-Nielsen, 1991]{bn1991}
Barndorff-Nielsen, O.~E. (1991).
\newblock Modified signed log likelihood ratio.
\newblock {\em Biometrika}, 78(3):557--563.

\bibitem[Berger et~al., 2009]{bergerbernardosun2009}
Berger, J.~O., Bernardo, J.~M., and Sun, D. (2009).
\newblock The formal definition of reference priors.
\newblock {\em Ann. Statist.}, 37(2):905--938.

\bibitem[Berger et~al., 2015]{berger.bernardo.sun.2015}
Berger, J.~O., Bernardo, J.~M., and Sun, D. (2015).
\newblock Overall objective priors.
\newblock {\em Bayesian Anal.}, 10(1):189--221.

\bibitem[Bernardo, 1979]{bernardo1979}
Bernardo, J.-M. (1979).
\newblock Reference posterior distributions for {B}ayesian inference.
\newblock {\em J. Roy. Statist. Soc. Ser. B}, 41:113--147.

\bibitem[B{\o}lviken and Skovlund, 1996]{bolviken.skovlund.1996}
B{\o}lviken, E. and Skovlund, E. (1996).
\newblock Confidence intervals from {M}onte {C}arlo tests.
\newblock {\em J. Amer. Statist. Assoc.}, 91(435):1071--1078.

\bibitem[Botev and Lloyd, 2015]{botev.lloyd.2015}
Botev, Z.~I. and Lloyd, C.~J. (2015).
\newblock Importance accelerated {R}obbins--{M}onro recursion with applications
  to parametric confidence limits.
\newblock {\em Electron. J. Stat.}, 9:2058--2075.

\bibitem[Brazzale et~al., 2007]{brazzale.davison.reid.2007}
Brazzale, A.~R., Davison, A.~C., and Reid, N. (2007).
\newblock {\em Applied Asymptotics: Case Studies in Small-Sample Statistics}.
\newblock Cambridge University Press, Cambridge.

\bibitem[Burch and Iyer, 1997]{burch.iyer.1997}
Burch, B.~D. and Iyer, H.~K. (1997).
\newblock Exact confidence intervals for a variance ratio (or heritability) in
  a mixed linear model.
\newblock {\em Biometrics}, 53(4):1318--1333.

\bibitem[Claggett et~al., 2014]{claggett.xie.tian.2014}
Claggett, B., Xie, M., and Tian, L. (2014).
\newblock Meta-analysis with fixed, unknown, study-specific parameters.
\newblock {\em J. Amer. Statist. Assoc.}, 109(508):1660--1671.

\bibitem[Davison and Hinkley, 1997]{davison.hinkley.1997}
Davison, A.~C. and Hinkley, D.~V. (1997).
\newblock {\em Bootstrap Methods and Their Application}, volume~1.
\newblock Cambridge University Press, Cambridge.

\bibitem[Dawid and Stone, 1982]{dawidstone1982}
Dawid, A.~P. and Stone, M. (1982).
\newblock The functional-model basis of fiducial inference.
\newblock {\em Ann. Statist.}, 10(4):1054--1074.
\newblock With discussion.

\bibitem[Dempster, 2008]{dempster2008}
Dempster, A.~P. (2008).
\newblock The {D}empster--{S}hafer calculus for statisticians.
\newblock {\em Internat. J. Approx. Reason.}, 48(2):365--377.

\bibitem[Den{\oe}ux, 2014]{denoeux2014}
Den{\oe}ux, T. (2014).
\newblock Likelihood-based belief function: justification and some extensions
  to low-quality data.
\newblock {\em Internat. J. Approx. Reason.}, 55(7):1535--1547.

\bibitem[E et~al., 2008]{e.hannig.iyer.2008}
E, L., Hannig, J., and Iyer, H. (2008).
\newblock Fiducial intervals for variance components in an unbalanced
  two-component normal mixed linear model.
\newblock {\em J. Amer. Statist. Assoc.}, 103(482):854--865.

\bibitem[Fisher, 1973]{fisher1973}
Fisher, R.~A. (1973).
\newblock {\em Statistical Methods and Scientific Inference}.
\newblock Hafner Press, New York, 3rd edition.

\bibitem[Fraser, 1990]{fraser1990}
Fraser, D. A.~S. (1990).
\newblock Tail probabilities from observed likelihoods.
\newblock {\em Biometrika}, 77(1):65--76.

\bibitem[Fraser, 1991]{fraser1991}
Fraser, D. A.~S. (1991).
\newblock Statistical inference: likelihood to significance.
\newblock {\em J. Amer. Statist. Assoc.}, 86(414):258--265.

\bibitem[Fraser, 2011]{fraser2011}
Fraser, D. A.~S. (2011).
\newblock Is {B}ayes posterior just quick and dirty confidence?
\newblock {\em Statist. Sci.}, 26(3):299--316.

\bibitem[Garthwaite and Buckland, 1992]{garthwaite.buckland.1992}
Garthwaite, P.~H. and Buckland, S.~T. (1992).
\newblock Generating {M}onte {C}arlo confidence intervals by the
  {R}obbins--{M}onro process.
\newblock {\em J. Roy. Statist. Soc. Ser. C}, 41(1):159--171.

\bibitem[Ghosh, 2011]{mghosh2011}
Ghosh, M. (2011).
\newblock Objective priors: An introduction for frequentists.
\newblock {\em Statist. Sci.}, 26(2):187--202.

\bibitem[Hannig, 2009]{hannig2009}
Hannig, J. (2009).
\newblock On generalized fiducial inference.
\newblock {\em Statist. Sinica}, 19(2):491--544.

\bibitem[Hannig et~al., 2015]{hannig.review.2015}
Hannig, J., Iyer, H., Lai, R. C.~S., and Lee, T. C.~M. (2015).
\newblock Generalized fiducial inference: A review.
\newblock {\it J. Amer. Statist. Assoc.}, to appear.

\bibitem[Hannig and Xie, 2012]{hannig.xie.2012}
Hannig, J. and Xie, M.-g. (2012).
\newblock A note on {D}empster--{S}hafer recombination of confidence
  distributions.
\newblock {\em Electron. J. Stat.}, 6:1943--1966.

\bibitem[Imhof, 1961]{imhof1961}
Imhof, J.~P. (1961).
\newblock Computing the distribution of quadratic forms in normal variables.
\newblock {\em Biometrika}, 48:419--426.

\bibitem[Jin et~al., 2015]{jin.li.jin.2015}
Jin, H., Li, S., and Jin, Y. (2015).
\newblock The {IM}-based method for testing non-inferiority of odds ratio in
  matched-pairs design.
\newblock {\it Unpublished manuscript}.

\bibitem[Liu et~al., 2015]{liu.liu.xie.2015}
Liu, D., Liu, R.~Y., and Xie, M. (2015).
\newblock Multivariate meta-analysis of heterogeneous studies using only
  summary statistics: efficiency and robustness.
\newblock {\em J. Amer. Statist. Assoc.}, 110(509):326--340.

\bibitem[Liu et~al., 2014]{liu.liu.xie.2014}
Liu, D., Liu, R.~Y., and Xie, M.-g. (2014).
\newblock Exact meta-analysis approach for discrete data and its application to
  {$2\times 2$} tables with rare events.
\newblock {\em J. Amer. Statist. Assoc.}, 109(508):1450--1465.

\bibitem[Martin, 2015]{plausfn}
Martin, R. (2015).
\newblock Plausibility functions and exact frequentist inference.
\newblock {\em J. Amer. Statist. Assoc.}, 110:1552--1561.

\bibitem[Martin and Liu, 2013]{imbasics}
Martin, R. and Liu, C. (2013).
\newblock Inferential models: A framework for prior-free posterior
  probabilistic inference.
\newblock {\em J. Amer. Statist. Assoc.}, 108(501):301--313.

\bibitem[Martin and Liu, 2015a]{imcond}
Martin, R. and Liu, C. (2015a).
\newblock Conditional inferential models: Combining information for prior-free
  probabilistic inference.
\newblock {\em J. R. Stat. Soc. Ser. B}, 77(1):195--217.

\bibitem[Martin and Liu, 2015b]{imbook}
Martin, R. and Liu, C. (2015b).
\newblock {\em Inferential Models: Reasoning with Uncertainty}.
\newblock Monographs in Statistics and Applied Probability Series. Chapman \&
  Hall/CRC Press.

\bibitem[Martin and Liu, 2015c]{immarg}
Martin, R. and Liu, C. (2015c).
\newblock Marginal inferential models: Prior-free probabilistic inference on
  interest parameters.
\newblock {\em J. Amer. Statist. Assoc.}, 110:1621--1631.

\bibitem[Normand, 1999]{normand1999}
Normand, S. (1999).
\newblock Meta-analysis: formulating, evaluating, combining, and reporting.
\newblock {\em Stat. Med.}, 18(3):321--359.

\bibitem[Olsen et~al., 1976]{olsen.seely.birkes.1976}
Olsen, A., Seely, J., and Birkes, D. (1976).
\newblock Invariant quadratic unbiased estimation for two variance components.
\newblock {\em Ann. Statist.}, 4(5):878--890.

\bibitem[Pal~Majumdar and Hannig, 2015]{majumdar.hannig.2015}
Pal~Majumdar, A. and Hannig, J. (2015).
\newblock Higher order asymptotics of generalized fiducial distributions.
\newblock {\it Unpublished manuscript}.

\bibitem[Reid, 2003]{reid2003}
Reid, N. (2003).
\newblock Asymptotics and the theory of inference.
\newblock {\em Ann. Statist.}, 31(6):1695--1731.

\bibitem[Robbins and Monro, 1951]{robbinsmonro}
Robbins, H. and Monro, S. (1951).
\newblock A stochastic approximation method.
\newblock {\em Ann. Math. Statistics}, 22:400--407.

\bibitem[Schweder and Hjort, 2002]{schweder.hjort.2002}
Schweder, T. and Hjort, N.~L. (2002).
\newblock Confidence and likelihood.
\newblock {\em Scand. J. Statist.}, 29(2):309--332.

\bibitem[Schweder and Hjort, 2016]{schweder.hjort.book}
Schweder, T. and Hjort, N.~L. (2016).
\newblock {\em Confidence, Likelihood, Probability: Statistical Inference with
  Confidence Distributions}.
\newblock Cambridge Univ. Press.

\bibitem[Shafer, 1987]{shafer1987}
Shafer, G. (1987).
\newblock Belief functions and possibility measures.
\newblock In Bezdek, J.~C., editor, {\em The Analysis of Fuzzy Information,
  Vol. 1: Mathematics and Logic}, pages 51--84. CRC.

\bibitem[Singh et~al., 2007]{singh.xie.strawderman.2007}
Singh, K., Xie, M., and Strawderman, W.~E. (2007).
\newblock Confidence distribution ({CD})---distribution estimator of a
  parameter.
\newblock In {\em Complex datasets and inverse problems}, volume~54 of {\em IMS
  Lecture Notes Monogr. Ser.}, pages 132--150. Inst. Math. Statist., Beachwood,
  OH.

\bibitem[Taraldsen and Lindqvist, 2013]{taraldsen.lindqvist.2013}
Taraldsen, G. and Lindqvist, B.~H. (2013).
\newblock Fiducial theory and optimal inference.
\newblock {\em Ann. Statist.}, 41(1):323--341.

\bibitem[Wasserman, 1990]{wasserman1990b}
Wasserman, L.~A. (1990).
\newblock Belief functions and statistical inference.
\newblock {\em Canad. J. Statist.}, 18(3):183--196.

\bibitem[Xie and Singh, 2013]{xie.singh.2012}
Xie, M. and Singh, K. (2013).
\newblock Confidence distribution, the frequentist distribution of a parameter
  -- a review.
\newblock {\em Int. Statist. Rev.}, 81(1):3--39.

\bibitem[Xie et~al., 2011]{xie.singh.strawderman.2011}
Xie, M., Singh, K., and Strawderman, W.~E. (2011).
\newblock Confidence distributions and a unifying framework for meta-analysis.
\newblock {\em J. Amer. Statist. Assoc.}, 106(493):320--333.

\bibitem[Xiong, 2015]{xiong.optim}
Xiong, S. (2015).
\newblock Local optimization-based statistical inference.
\newblock Unpublished manuscript, {\tt arXiv:1502.00465}.

\bibitem[Yang et~al., 2014]{yang.liu.liu.xie.2014}
Yang, G., Liu, D., Liu, R.~Y., Xie, M., and Hoaglin, D.~C. (2014).
\newblock Efficient network meta-analysis: a confidence distribution approach.
\newblock {\em Stat. Methodol.}, 20:105--125.

\end{thebibliography}

\end{document}